\documentclass[twocolumn,10pt]{article}
\usepackage{graphicx} 
\usepackage{float}
\usepackage{amsthm}
\usepackage{amsmath}
\usepackage{amssymb}
\usepackage{mathtools}
\usepackage{algpseudocode}
\usepackage{courier}
\usepackage{geometry}
\usepackage{hyperref}

\geometry{
  left=1.2cm,
  right=1.2cm,
  top=1.5cm,
  bottom=1.5cm,
}

\newtheorem{lemma}{Lemma}[subsection]
\newtheorem{theorem}{Theorem}[subsection]
\newtheorem{definition}{Definition}[subsection]

\begin{document}

\title{Selection Improvements on the Parallel Iterative Improvement Algorithm for Stable Matching
\thanks{This work is proudly supported by NSF grant CNS-2149591. \\ GitHub \url{https://github.com/salberi1/Parallelizing_Stable_Matching}}}
\author{
  Scott Wynn$^{1}$, Alec Kyritsis$^{2}$, Stephora Alberi$^{3}$, Enyue Lu$^{3}$\\
  \\
  $^{1}$Department of Computer Science and Engineering, University of Washington, Seattle, WA 98195, USA\\
  $^{2}$Department of Computer Science, Middlebury College, Middlebury, VT 05753, USA\\
  $^{3}$Department of Computer Science, Salisbury University, Salisbury, MD 21801, USA
}

\date{}
\maketitle


\begin{abstract}
Sequential algorithms for the Stable Matching Problem are often too slow in the context of some large scale applications like switch scheduling. Parallel architectures can offer a notable decrease in runtime complexity. We propose a stable matching algorithm using $n^2$ processors that converges in $O(n log(n))$ average runtime. The algorithm is structurally based on the Parallel Iterative Improvement (PII) algorithm, where we improve the convergence rate from 90\% to 100\% over a large number of trials. 

We suggest alternative selection methods for pairs in the PII algorithm, called Right-Minimum and Dynamic Selection, as well as a faster preprocessing step, called Quick Initialization, resulting in full convergence over 3.6 million trials and significantly improved runtime.
\end{abstract}

\section{Introduction}

Originally developed to model the college admissions process, the Stable Matching Problem \cite{gale1962college} has spawned numerous applications in the social, physical and computational sciences \cite{fenoaltea2021stable, floreen2010almost, irving1987efficient, cao2024heuristic, roth1995nrmp}. One such application, switch scheduling in large data centers, is of particular interest due to the exponential increase in network throughput over the recent decade \cite{zhang2017stable}.
\\ \\
In the context of these large scale implementations, the classic Gale-Shapley algorithm from \cite{gale1962college} can be too slow, requiring $O(n^2)$ runtime to complete. Parallel and Distributed architectures can offer a notable decrease in the computational complexity of the Stable Matching Problem \cite{feder2000sublinear, huang2007on, subramanian1992parallel}. Without considering some of the practical constraints on parallel architecture, \cite{feder2000sublinear} showed that a sublinear algorithm for stable matching is possible. In this paper, we study an augmented version of the Parallel Iterative Improvement (PII) algorithm proposed in \cite{lu2003parallel}, which improves on previous ideas in \cite{white2013parallel} and preliminary work in \cite{kyritsis2023dynamic} and offers significant runtime improvement while requiring only $O(n^2)$ processors, making the parallel architecture required for the algorithm much more feasible for large scale applications.
\\ \\
In its initial implementation, the PII algorithm finds a stable matching within $5n$ iterations and a total of $O(n \: log(n))$ runtime in approximately $90\%$ of cases when tested on input size of $n = \{10, 20, ..., 100\}$. In the remaining $10\%$ of cases, the algorithm cycles indefinitely, repeatedly returning to a previous iteration. We show that augmenting the iteration methods to force convergence can significantly improve runtime and convergence rate of the PII algorithm. 

\subsection{Preliminaries}
Denote the sets of $n$ men and $n$ women by $M = \{m_1, m_2, ..., m_n\}$ and $W = \{w_1, w_2, ..., w_n\}$ respectively. Let $P$ be the $n \times n$ matrix, or \textit{preference matrix}, with each row i corresponding to the ith element of $M$ and each column j corresponding to the jth element of $W$, as used in \cite{lu2003parallel}. Hence, $p_{i, j} \in P$ represents the respective rankings of man $i$ with woman $j$. For a given pairing $p_{i, j}$, denote the preference ranking of man $i$ for woman $j$ and the preference ranking of woman $j$ for man $i$ by $L(p_{i, j})$ and $R(p_{i, j})$ respectively. We call these quantities the \textit{left value} and \textit{right value} of $p_{i, j}$ respectively. Then, man $i$ prefers woman $m$ to woman $j$ if $L(p_{i, m}) < L(p_{i, j})$, and similarly woman $j$ prefers man $l$ to man $i$ if $R(p_{l, j}) < R(p_{i, j})$.
\\ \\
Let $\mu$ be a one-to-one mapping between $M$ and $W$. We call $\mu$ a \textit{matching} and say that $p_{i, j}$ is \textit{matched} under $\mu$ if and only if $\mu(i) = j$ and $\mu(j) = i$. For convenience, we often say $p_{i, j} \in \mu$. For a given matching $\mu$, if man $i$ and women $j$ prefer each other to their current partners, then $p_{i, j}$ is a \textit{blocking pair} and the current pairs are called \textit{unstable pairs}. Formally, in a preference matrix $P$, a pair $p_{i, j}$ is a blocking pair with respect to a matching $\mu$ and unstable pairs $p_{i, l}, p_{m, j} \in \mu$ if and only if $L(p_{i, j}) < L(p_{i, l})$ and $R(p_{i, j}) < R(p_{m, j})$. If a matching $\mu$ on preference matrix $P$ contains any unstable pairs, we classify $\mu$ as an unstable matching.
\\ \\
Finally, let $\mu_1, \mu_2, ..., \mu_n$ denote a set of matchings on preference matrix $P$. Suppose $\mu_k$ is an unstable matching with blocking pair $p_{i, j}$. Then $\mu_{k + 1}$ \textit{satisfies} $p_{i, j}$ if and only if $p_{i, j} \in \mu_{k + 1}$. In the following sections it will be useful to track general sequences of blocking pairs.

\subsection{Parallel Architecture}
We assume $n^2$ processors (PE's) with hypercube or multiple broadcast bus architecture. This enables row and column broadcasts and find minimum operations to be completed with at most $O(log(n))$ complexity, as outlined in \cite{lu2003parallel}. It may be useful to consider the PE's arranged on an $n \times n$ mesh $A$ such that $PE_{i, j}$ corresponds to pair $p_{i, j}$ in preference matrix $P$.

\subsection{Statement of Results}
Our contribution to the problem lies in the presentation and analysis of two new pair selection methods to augment the PII algorithm.
\\ \\
Method 1: We propose Right-Minimum Selection, a method to force termination of the algorithm by only choosing blocking pairs in which one side's preferences improve.
\\ \\
Method 2: We propose Dynamic Selection, a method that optimally chooses the subsequent matching by considering all previously selected blocking pairs as a strong combination.
\\ \\
The new augmented algorithm converges faster and more often than previous iterations of the PII algorithm (showing $100\%$ convergence over $3.6$ million trials across different n values), and the convergence rate scales significantly better with higher values of n. 

\subsection{Overview}
The rest of the paper will proceed as follows. In section 2 we will discuss related work, including a brief overview of the PII Algorithm and previous iterations on the PII Algorithm. In Section 3 we present our selection methods, Right-Minimum and Dynamic Selection. Section 4 presents the proposed augmented PII algorithm, and Sections 5 and 6 analyze the results of our proposed methods and algorithm and outline future work.

\section{Previous Work}

\subsection{Iterative Stable Matching Algorithms}
When observed in real life through social networks, matchings attempt to converge toward stability through pairing together blocking pairs. Roth and Vande Vate have shown that swapping along blocking pairs at random from any initial matching will almost surely result in a stable matching eventually \cite{roth1990random}.
However, it has also been shown that iterative methods for swapping along specific blocking pairs each iteration may result in indefinite cycling, resulting in a failure to ever converge to stability \cite{knuth1976marriages}.

\subsection{The PII Algorithm}
The PII Algorithm presented in \cite{lu2003parallel} is one such replacement algorithm.
\\ \\
The PII algorithm begins with a randomly generated matching $\mu_0$. The algorithm then executes the following steps each iteration, terminating when a stable matching is achieved:
\\ \\
1) Find all blocking pairs
\\ \\
2) In each row of M with a blocking pair, select the one with the lowest left value as an NM1-generating pair
\\ \\
3) In each column of M with an NM1-generating pair, select the one with the lowest right value as an NM1 pair
\\ \\
4) Remove all pairs in the current matching in the same row or column as an NM1 pair, and add all NM1 pairs to the matching
\\ \\
5) Fill in any open columns and rows with pairs as described in \cite{lu2003parallel}.
\\ \\
 The PII algorithm requires $n^2$ parallel processors, and each iteration will complete in $O(log(n))$ complexity. Lu empirically observed that the convergence rate of the PII algorithm on randomly generated preference matrices steadily decreases from $99\%$ at $n=10$ to $86\%$ at $n=100$. In cases where the PII algorithm did not converge, it continued cycling indefinitely.

\subsection{The PII-SC Algorithm}
White proposed the Smart Initialization and Cycle Detection methods to improve the convergence of the PII Algorithm in \cite{white2013parallel}.
\\ \\
In Smart Initialization, each man $m_i$ in $M$ proposes to his top choice woman. If multiple men propose to the same women, she is matched with the man she most prefers. Those men left unmatched then propose to their next choice woman, excluding any women matched in a previous round, and a similar process ensues. This occurs until all men have been matched with a woman, and runs with $O(n \: log(n))$ complexity in parallel as described in \cite{white2013parallel}.
\\ \\
White observed that the majority of cycles were formed by a single $NM1$ pair was alternating between pairs in two distinct sub stable matchings each iteration. A stable matching could then be constructed by including every other $NM1$ pair in a cycle and the set of pairs that remained in the matching throughout the cycle. Some more prominent edge cases for detecting cycles were also added to the cycle detection method proposed in \cite{white2013parallel}.
\\ \\
The PII-SC Algorithm combines the PII Algorithm with Smart Initialization and Cycle Detection, and fails to converge in approximately $1$ in $1$ million randomly generated preference matrices from $n=10$ to $n=100$, with failure to converge significantly higher than $1$ in $1$ million for $n=100$, suggesting some scalability issues. Due to the existence of numerous cases where cycles were not formed as described above, the PII-SC algorithm and subsequent improvements have continued improving the observed convergence by patching edge cases. As a result, the convergence rate of the PII-SC algorithm decreases significantly on larger $n$, as more frequent edge cases occur, illustrated by the results in \cite{white2013parallel}.

\begin{figure*}[t]
    \centering
    \includegraphics[scale = 0.65]{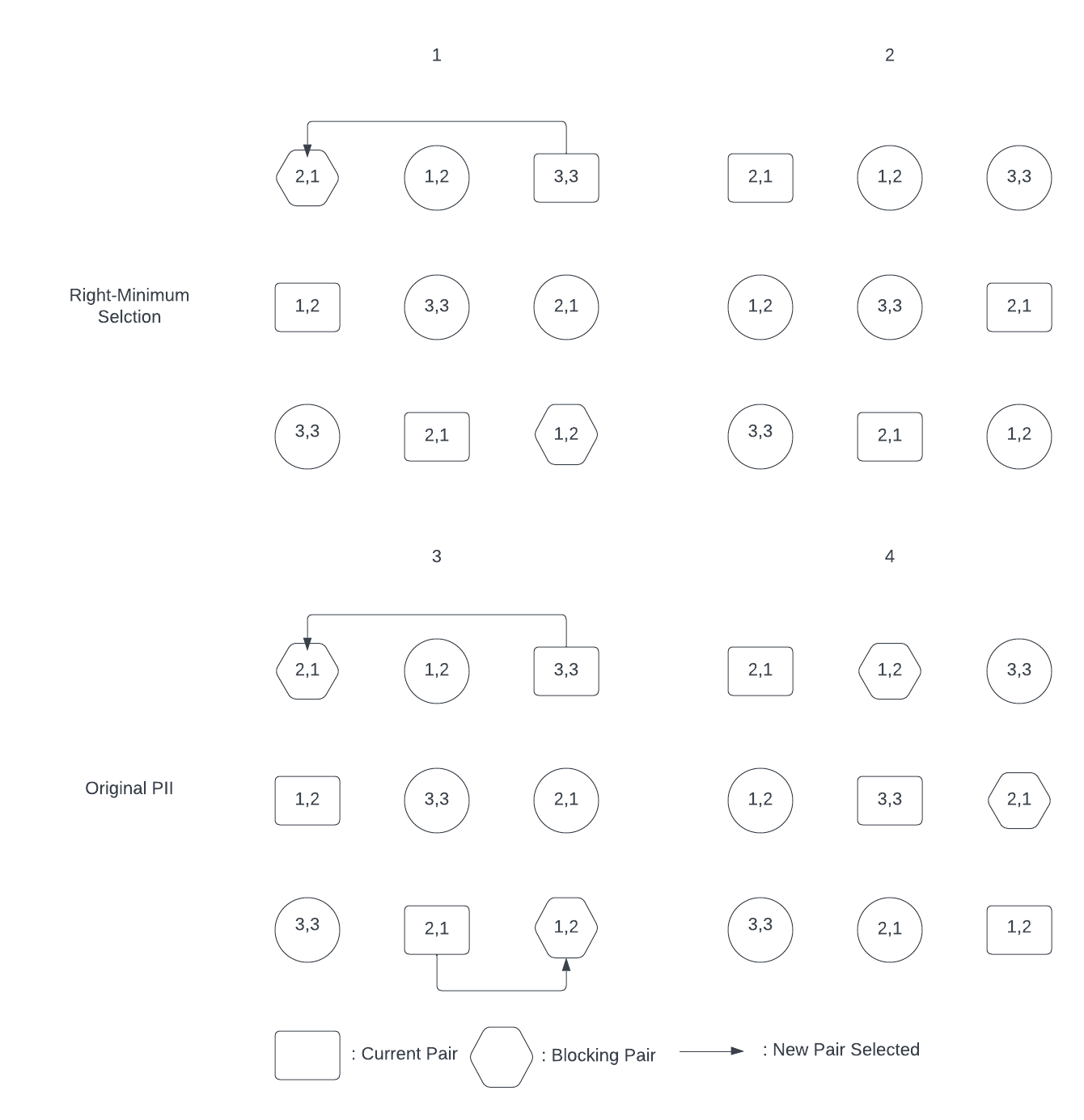}
    \caption{Comparison of a single iteration of Right-Minimum Selection (from 1 to 2) and a single iteration of the original PII algorithm (from 3 to 4). Each sub-figure (1,2,3,4) contains the entries of the preference matrix, with shapes indicating the matching at a given iteration. The restriction imposed by Right-Minimum selection allows immediate convergence, while PII continues to allow multiple blocking pairs.}
    \label{fig:nm1Right-Selection}
\end{figure*}

\section{Selection Methods}

In this section, we present our main contribution to the line of PII algorithms: Updated selection methods.

\subsection{Right-Minimum-Selection}

\noindent To enhance the algorithm's efficiency, we introduce a novel selection method termed Right-Minimum-Selection. We define \textbf{Right-Minimum-Selection} as follows:

\begin{definition}
    [Right-Minimum-Selection] Let $S$ be a preference matrix with $n$ men and $n$ women, and let $\mu$ be an unstable matching on $S$. Suppose $p_{i, j}$ is an unstable pair on $\mu$ with corresponding blocking pair $p_{i, l}$. Then the algorithm will consider $p_{i, l}$ as a potential NM1-generating pair if and only if $R(p_{i, j}) > R(p_{i, l})$.
\end{definition}

\noindent Informally, we only select NM1-generating pairs where for a given row the woman prefers her new matching, demonstrated in \ref{fig:nm1Right-Selection}.
\\ \\

\noindent Often, we observed that the PII algorithm keeps unstable pairs in the matching by satisfying a blocking pair that does not improve the matching overall, leading to cycles as described in \cite{lu2003parallel}. Intuitively, Right-Minimum selection forces the algorithm to terminate by continually iterating on the matching in favor of women similar to the process in a similar manner to the original Gale-Shapley algorithm. It should be noted that a Left-Minimum selection defined in a similar manner would yield similar results (with random initialization).

\hfill \break
\noindent Right-Minimum Selection leads to the following results justifying the algorithm will terminate:

\begin{lemma}
    [$NM1$-Cycle Freeness] Let $A$ be an instance of the $PII$ algorithm. If $A$ uses Right-Minimum-Selection over its entire duration, then $A$ is $NM1$-cycle free. 
\end{lemma}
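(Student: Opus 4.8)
The plan is to rule out recurrence by exhibiting a quantity that behaves monotonically along the iterations of $A$. I take an \emph{NM1-cycle} to be a periodic recurrence of matchings $\mu_t, \mu_{t+1}, \ldots, \mu_{t+T} = \mu_t$ in which each transition installs at least one NM1 pair. The natural candidate potential is the total right value of the current matching, $\Phi(\mu) = \sum_{(i,j)\in\mu} R(p_{i,j})$, and the reason to expect control over it is the defining feature of Right-Minimum-Selection: when $p_{i,l}$ is admitted as an NM1-generating pair in row $i$ against the matched pair $p_{i,j}$, and is then installed against woman $l$'s matched pair $p_{m,l}$, its right value lies strictly below \emph{both} displaced pairs, since $R(p_{i,l}) < R(p_{m,l})$ by the blocking-pair property and $R(p_{i,l}) < R(p_{i,j})$ by the Right-Minimum condition. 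Plain PII guarantees only the first of these; the second is the new lever. I would also record that every NM1-generating pair strictly lowers its man's left value, $L(p_{i,l}) < L(p_{i,j})$, so $\Psi(\mu) = \sum_{(i,j)\in\mu} L(p_{i,j})$ descends on proposals as well.

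First I would analyze a single iteration. Because the row rule admits at most one NM1-generating pair per row and the column rule keeps at most one NM1 pair per column, the installed NM1 pairs form a partial permutation, and I can split the transition $\mu_t \to \mu_{t+1}$ into (i) the simultaneous installation of all NM1 pairs and (ii) the arbitrary fill-in of the rows and columns thereby vacated. For step (i) the accounting is clean: the right values expelled from the matching by the displaced row- and column-pairs strictly exceed the right values introduced by the NM1 pairs, so the installation contributes a strictly negative increment to $\Phi$ (and, by the left-value observation, to $\Psi$).

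Then I would argue by contradiction, combining monotonicity with periodicity. Around an NM1-cycle the potentials telescope, so $\sum_t \Delta\Phi_t = 0$ and $\sum_t \Delta\Psi_t = 0$; since the installation step of each transition is strictly negative in both, the cumulative installation deficit must be cancelled exactly by the fill-in step. The point of the proof is that Right-Minimum-Selection over-constrains this cancellation: fill-in must simultaneously undo a strict right-value descent and a strict left-value descent for the recurring pairs, and the two demands cannot both be met. Following White's observation that these cycles are typically driven by a single NM1 pair oscillating between two sub-matchings, I would localize the argument to that recurring pair $p_{i,l}$, track the right value of its man and its woman across one period, and show that each reinstallation strictly lowers the ambient right value it is measured against, so the configuration that made $p_{i,l}$ selectable cannot be restored.

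The hard part will be step (ii): the fill-in reassigns displaced men and women arbitrarily, and a refilled pair may carry a larger right (and left) value than the one it replaced, so $\Phi$ is not monotone iteration-by-iteration and a blunt global sum telescopes harmlessly to zero. Essentially all of the real work is in showing that this arbitrary fill-in can never supply the exact compensating increase needed to close the cycle. I expect the decisive ingredient to be the \emph{second} strict inequality $R(p_{i,l}) < R(p_{i,j})$ — absent from plain PII — which forces the recurring pair to improve its man's desirability as well as his preference, so that restoring the pre-cycle matching would require a man or woman to be filled back into a strictly worse position than any the dynamics can produce. If tracking $\Phi$ directly proves too lossy against the fill-in, the fallback is to compare matchings by the sorted vector of right values lexicographically, which is insensitive to where the minimum right value sits and may better isolate the monotone effect of the installed NM1 pairs.
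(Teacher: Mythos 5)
Your proposal correctly isolates the lever that Right-Minimum-Selection adds---for a row replacement the incoming pair satisfies $R(p_{i,l}) < R(p_{i,j})$, not merely $L(p_{i,l}) < L(p_{i,j})$---and this is exactly the inequality the paper's proof turns on. But your proof strategy does not close, and you say so yourself: the global potential $\Phi(\mu) = \sum_{(i,j)\in\mu} R(p_{i,j})$ is not monotone, because the arbitrary fill-in step can raise it, and the claim that fill-in can never exactly cancel the installation deficit around a period---which you correctly identify as ``essentially all of the real work''---is left as an expectation rather than an argument. Neither the localization to a single oscillating pair nor the lexicographic fallback is carried out, so what you have is the right key inequality attached to a contradiction scheme whose decisive step is missing. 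As written, the argument cannot tolerate an adversarial fill-in rule, yet the lemma must, since PII fills vacated rows and columns arbitrarily.

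The gap disappears if you localize instead of globalize. The paper never sums over the matching; it tracks a single chain of NM1 pairs $p_1, p_2, \ldots, p_k$ in which each $p_{t+1}$ replaces $p_t$ by landing in its row or column. If the replacement occurs in a column, then $R(p_{t+1}) < R(p_t)$ by the blocking-pair definition; if it occurs in a row, then $R(p_{t+1}) < R(p_t)$ by Right-Minimum-Selection---your inequality. Hence the right value strictly decreases along the chain regardless of what fill-in does elsewhere, because fill-in pairs never enter the chain; an NM1-cycle, which in the paper's sense is a recurrence within this replacement chain rather than a recurrence of entire matchings, would force $R(p_1) > R(p_2) > \cdots > R(p_1)$, a contradiction. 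The monotone quantity is attached to the object an NM1-cycle actually lives on (the chain), not to the object fill-in perturbs (the matching), so the fill-in problem is sidestepped entirely rather than fought. Note also that your definition of NM1-cycle as periodic matchings with an NM1 installation each step is stronger than what the lemma asserts; matching-level cycle freeness is deferred to the paper's Theorem 1, which additionally has to account for NM2 pairs.
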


\begin{proof}
  Let $S = \{p_1, p_2, ..., p_k\}$ be a sequence of NM1-pairs such that $p_{i+1}$ is a blocking pair of $p_i$. Observe that $p_{i + 1}$ may only replace $p_i$ if it appears in an identical row or column. If replacement occurs in a column, then $R(p_{i + 1}) < R(p_i)$ by the definition of a blocking pair. If replacement occurs in a row, $R(p_{i + 1}) < R(p_i)$ by the definition of Right-Minimum Selection. Thus, we produce a chain of NM1 pairs $R$ satisfying $R(p_1) > R(p_2) > ... > R(p_k)$. For a cycle to occur, $p_k$ must return to a row or column of $p_1$, violating the monotonicity of $R$ and the results follow. 
\end{proof}

\noindent We also have

\begin{lemma}
    Let $S = \{p_1, p_2, ..., p_k\}$ be an $NM1$-path taken by the Algorithm. Then each time $S$ is traversed, it's length decreases by at least two.
\end{lemma}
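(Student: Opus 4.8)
The plan is to build directly on the strict monotonicity established in the $NM1$-Cycle Freeness lemma: along any $NM1$-path $p_1, p_2, \ldots, p_k$ taken by the algorithm we have $R(p_1) > R(p_2) > \cdots > R(p_k)$. My strategy is to show that a single traversal of $S$ permanently resolves \emph{both} endpoints of the path, so that the portion of $S$ that can still be traversed on the next pass is contained in $p_2, \ldots, p_{k-1}$, a sub-path of length $k-2$. The whole argument is thus an accounting at the two ends, fed by the monotone $R$-chain; the decrease of exactly two should fall out once each end is shown to be consumed once.

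First I would analyze the head $p_1$. By construction $p_2$ is a blocking pair of $p_1$ with $R(p_2) < R(p_1)$, so as soon as the path is advanced, $p_2$ replaces $p_1$ in their shared row or column and the man and woman of $p_1$ are rematched. By monotonicity of $R$, the traversal can never return to $p_1$, since a return would force the right value to increase back up to $R(p_1)$; hence $p_1$ is consumed from the front. Next I would analyze the tail $p_k$. Since $p_k$ is terminal, no pair of $S$ blocks it, so once it is satisfied it has no successor on the path and remains matched for the rest of the traversal; I would use that $R(p_k)$ is the minimum along the chain to rule out $p_k$ re-entering on a later pass, as that would require prolonging the monotone $R$-chain past its smallest value. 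Combining the two, each traversal strips $p_1$ from the front and $p_k$ from the back, so the length of $S$ drops by at least two.

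The main obstacle I anticipate is not the endpoint arithmetic but pinning down the notion of a \emph{traversal} precisely and checking that the head and tail are resolved independently, so that consuming one end does not interfere with the accounting at the other. In particular I would have to rule out the ``fill in open rows and columns arbitrarily'' step of the PII iteration reintroducing $p_1$ or $p_k$ as an $NM1$ pair on a subsequent pass: the monotonicity of $R$ should forbid the front from reappearing, while the terminality of $p_k$ together with the minimality of $R(p_k)$ should pin the tail. Making these interactions rigorous, rather than establishing the ``minus two'' count itself, is where I expect the real work to lie.
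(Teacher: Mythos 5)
Your tail argument is sound and matches half of the paper's proof: the terminal pair $p_k$ is never replaced by an NM1 or NM2 pair, so it stays matched and cannot be traversed again. But your head argument has a genuine gap, and it sits exactly where you anticipated trouble: the arbitrary fill-in step. The monotonicity $R(p_1) > R(p_2) > \cdots > R(p_k)$ from Lemma 1 constrains only chains of NM1 replacements; it says nothing about NM2 pairs, which are inserted arbitrarily into open rows and columns. After the first traversal, $p_1$'s row and column can be refilled with NM2 pairs of arbitrarily large left and right values --- indeed $p_1$ itself can be reinserted as an NM2 fill --- after which $p_1$ is again eligible (or is again blocked by $p_2$), and a fresh monotone chain $q \to p_1 \to p_2 \to \cdots$ begins with no violation of monotonicity, because the displaced NM2 pair $q$ never belonged to the earlier chain. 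So nothing forbids the front of the path from reappearing. A sanity check that your argument proves too much: every $p_i$ is the head of the sub-path $p_i, \ldots, p_k$, so your reasoning would show that \emph{no} NM1 pair can ever be re-selected; then every path could be traversed at most once, and cycle-freeness would already follow from Lemma 1 alone, making this lemma and the mixed NM1/NM2 case in Theorem 1 pointless.

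The paper instead extracts both removals from the tail. Since $p_k$ remains matched permanently, and $p_{k-1}$ lies in the same row or column as $p_k$, re-selecting $p_{k-1}$ as an NM1 pair would require displacing $p_k$, which is impossible; hence a second traversal reaches at most $p_{k-2}$. The set $S' = S \setminus \{p_k, p_{k-1}\}$ is again an NM1 path, and the argument applies recursively, giving the decrease of two on every traversal. To repair your proof, drop the head argument and reuse the permanence of the terminal pair a second time: it is the only currency available here, and it pays for both $p_k$ and its neighbor $p_{k-1}$.
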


\begin{proof}
    Let $P$ be an $n \times n$ preference matrix and $S$ be an $NM1$ path such that $S = \{p_1, p_2, ..., p_k\, \; p_i \in S\}$. Assume we traverse $P$ once to arrive at $v_k$. Observe that by definition $v_k$ may not be replaced by an NM1 or NM2 pair. Since $v_{k - 1}$ lives in a row or column of $v_k$, upon a second traversal of $S$ we may not arrive at $v_{k -1}$ since it would have to replace $v_k$. Now define $S' = S \setminus \{v_k, v_{k -1}\}$ and observe that $S'$ is also an $NM1$ path. We then apply similar logic from above to show that $P'$ must also satisfy this property. Hence, we may construct a sequence of paths $P \supset P' \supset P'' \supset ...$ such that the cardinality of each path decreases by at least two upon each traversal.
    
\end{proof}

\noindent We combine Lemmas 3.1.1 and 3.1.2 to produce the following result:

\begin{theorem}
    [Cycle Freeness]
    Let $A$ be an instance of the $PII$ algorithm. If $A$ uses Right-Minimum Selection over its entire duration, then $A$ is cycle free.
\end{theorem}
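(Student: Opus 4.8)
The plan is to combine the two preceding lemmas to rule out every possible cycle, not just $NM1$-cycles. The overall strategy is a proof by contradiction: assume $A$ cycles despite using Right-Minimum Selection, and derive a violation of one of the two lemmas. A cycle means there is some iteration $t$ and a later iteration $t' > t$ such that the matching repeats, $\mu_{t} = \mu_{t'}$, and hence the algorithm revisits the same sequence of matchings indefinitely. The key observation I would foreground is that any cycle must be sustained by the repeated traversal of some underlying $NM1$-path, since the matching only changes by satisfying $NM1$ pairs.

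First I would set up the contradiction by extracting, from an assumed cycle, a fixed $NM1$-path $S = \{p_1, p_2, \ldots, p_k\}$ that is traversed on every pass through the cycle. Lemma 1 ($NM1$-Cycle Freeness) already tells us that no single such path can close up on itself into an $NM1$-cycle, because the right-values are strictly monotone decreasing along the chain, $R(p_1) > R(p_2) > \cdots > R(p_k)$, and returning to the start would violate this monotonicity. So a genuine cycle of the full algorithm cannot simply be a repeated $NM1$-cycle; it must instead be a path that the algorithm re-enters from its endpoint. This is exactly the regime Lemma 2 governs.

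Next I would invoke Lemma 2: each time the path $S$ is traversed, its length decreases by at least two. The point is that for the algorithm to cycle forever, it must traverse some recurring path infinitely often. But Lemma 2 forces the length of the traversed path to strictly shrink (by at least two) on each pass, and a path of nonnegative integer length cannot decrease forever. After finitely many traversals the path is exhausted, so no pair remains to be satisfied, contradicting the assumption that the algorithm continues to make changes indefinitely. Having reached a contradiction in both possible modes — a closed $NM1$-cycle (ruled out by Lemma 1) and a repeatedly re-traversed path (ruled out by Lemma 2) — I would conclude that $A$ is cycle free.

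The main obstacle I anticipate is the bookkeeping that links an abstract ``cycle of the algorithm'' to the concrete ``$NM1$-path traversed each pass'' objects that the lemmas talk about. Lemma 1 is phrased in terms of a sequence of $NM1$-pairs where each is a blocking pair of its predecessor, and Lemma 2 in terms of an $NM1$-path being traversed; I would need to argue carefully that any infinite cycle of the algorithm induces such a path and that ``traversal'' in Lemma 2 corresponds to one full loop of the cycle. Pinning down this correspondence precisely — in particular, ensuring the decreasing-length argument of Lemma 2 really applies to the path realized by the cycle, and handling the interaction between row-replacements and column-replacements so that the two lemmas cover all cases jointly — is where the real care is required; the monotonicity and length-decrease facts themselves are then immediate to apply.
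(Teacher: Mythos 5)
Your proposal has the right high-level shape (contradiction, combining the two lemmas), but it rests on a premise that is false for the PII algorithm and that sidesteps the actual difficulty the theorem exists to handle. You assert that ``the matching only changes by satisfying $NM1$ pairs,'' and from this you conclude every cycle is sustained purely by traversal of an $NM1$-path. In the PII algorithm, after the $NM1$ pairs are added and conflicting pairs removed, the open rows and columns are filled in arbitrarily with new pairs --- the $NM2$ pairs. These fill-in pairs also change the matching, they carry no monotonicity in their right values, and they are precisely the pairs that Lemma 1 cannot control. Indeed, the paper's first step is the observation that, \emph{because of} Lemma 1, any surviving cycle must be composed of both $NM1$ and $NM2$ pairs; your argument instead defines that case out of existence, so what you prove is essentially only that purely-$NM1$ cycles are impossible, which is Lemma 1 again rather than the theorem.

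The missing idea is the interaction between the shrinking $NM1$-paths and the $NM2$ pairs they create: the paper takes the $NM1$ path $S$ in the cycle that is quickest to traverse, notes that the cycle depends on an $NM2$ pair $u_i$ spawned when some $p_i \in S$ is processed, and then uses Lemma 2 (path length drops by at least two per traversal) together with the assumed recurrence to traverse $S$ until $p_i$ itself is deleted --- at which point $u_i$ can no longer be spawned, contradicting the recurrence of the cycle's states. A secondary imprecision in your write-up is the final contradiction: exhausting one path does not contradict ``the algorithm continues to make changes indefinitely,'' since changes could occur elsewhere in the matrix; the contradiction must be with the assumption that the cycle's states (and hence this particular path and its spawned pairs) recur identically. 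Your instinct that the bookkeeping between ``cycle'' and ``traversed path'' is the delicate point is correct, but the resolution of that bookkeeping runs through the $NM2$ pairs, which your proposal never mentions.
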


\begin{proof}
    Assume the contrary. Lemma 1 implies that such a cycle must be composed of both $NM1$ and $NM2$ pairs. Consider the $NM1$ path $S$ that takes the shortest iterations to traverse in the cycle with length $k$. Suppose $S$ spawns $NM2$ pair $u_i$ at $NM1$ pair $p_i \in S$. By lemma 2 and the assumption of cycling, we may traverse $S$ till $v_i$ is deleted. Then $p_i$ may not spawn $u_i$, contradiction.
    
\end{proof}

\noindent \textbf{Complexity}

\noindent It is straightforward to see that the additional check in each step when choosing NM1-generating pairs can be achieved in $O(1)$ time, and thus the per iteration time complexity remains $O(log(n))$.
\\ \\
We note that while Right-Minimum Selection has the strong property that it may not cycle, in practice it does not necessarily yield a stable matching and often is supplemented with a standard $NM1$-selection method. This occurs when all blocking pairs remaining in a matching do not satisfy the condition to be considered under Right-Minimum Selection.
\\ \\
We have empirically determined that, if Right-Minimum Selection terminates without successfully finding a stable matching, there are very few blocking pairs remaining and continuing with the original PII selection method often finds a stable matching.

\subsection{Dynamic Selection}

\begin{figure*}[t]
    \centering
    \includegraphics[scale = 0.65]{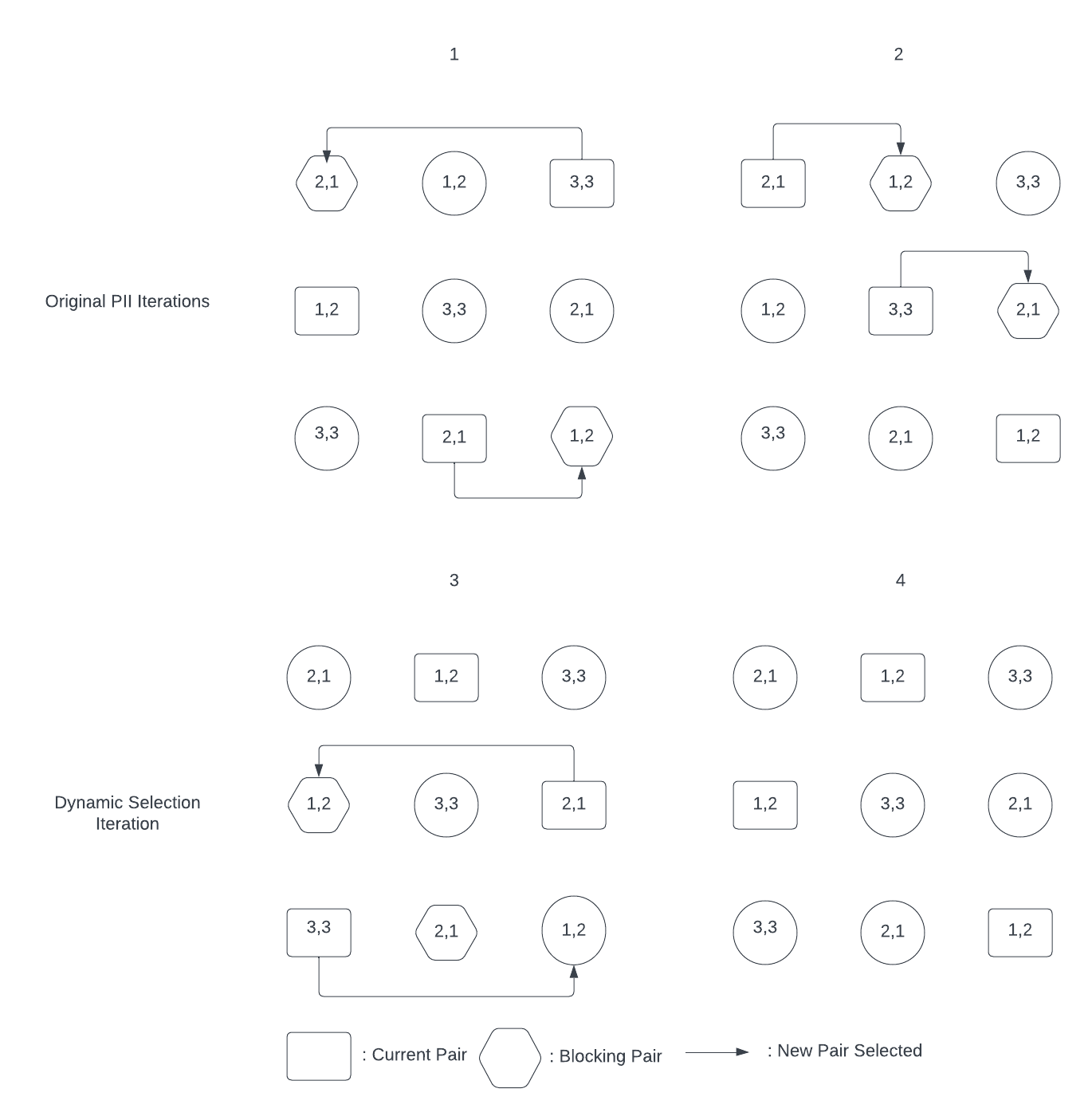}
    \caption{Example of Dynamic Selection. Each sub-figure (1,2,3,4) contains the entries of the preference matrix, with shapes indicating the matching at a given iteration. Iterations of the original PII algorithm are done for the first 2 iterations (from 1-2 and 2-3). Dynamic Selection is done for the 3rd iteration (from 3-4) leading to convergence. Note that another iteration of the original PII algorithm after the matching at 3 would return to the initial matching (1) causing a cycle.}
    \label{fig:Dynamic-Selection}
\end{figure*}

\noindent We define \textbf{Dynamic Selection} as follows:

\begin{definition}
    [Dynamic Selection] Let $S$ be a preference matrix with unstable matching $\mu_k$. Assume NM1 selection as proposed in the original PII algorithm in \cite{lu2003parallel}, and let $N_i$ be the set of NM1 pairs in row $i$ the algorithm chooses over matchings $\mu_1, \mu_2, ..., \mu_{k - 1}$. Suppose $p_{i, j}$ is a blocking pair of $\mu_k$. Then $p_{i, j}$ is an NM1-generating pair if and only if $L(p_{i,j}) < \min\{L(p), p \in N_i\}$.

\end{definition}

    \noindent Intuitively, in an iteration where Dynamic Selection is used, we only select a blocking pair as a new  NM1-generating pair if its left value is less than all previously selected NM1-generating pairs, as shown in  \ref{fig:Dynamic-Selection}.
    \\ \\
\textbf{Complexity}

\noindent We now justify that the per iteration time complexity remains $O(log(n))$ with Dynamic Selection:

\hfill \break
\noindent Each $PE$ in row $i$ maintains a pointer to $PE_{i, l}$ corresponding to $L(p_{i, l}) = \min\{L(p), p \in N_i\}$. For convenience we term this the \textit{minimum pointer} and $PE_{i, l}$ the \textit{left-minimum processor}. $PE_{i, l}$ points to itself. To avoid instances in which the algorithm consecutively selects that pair with the minimum left value, and to facilitate discovery of a greater number of NM1 pairs, we impose a wait time $\mathcal{W} \in \mathbb{Z}^+$.  Hence, $PE_{i, l}$ also logs the last iteration at which it was compared $c$, and may enter the comparison process again when $k - c > \mathcal{W}$ where $k$ is the current iteration. If this is the case, we say the \textit{wait condition} is satisfied.
\\ \\
Intuitively, the wait condition ensures that we only use dynamic selection when we have enough previously selected pairs to make the step effective, and when the standard iterations are taking too long to converge, suggesting the algorithm is likely in a cycle.
\\ \\
It is straightforward to see that the additional variables may be stored with $O(1)$ space complexity. We show that per iteration time complexity remains $O(log(n))$:

\begin{proof}

    Assume at iteration $k$ we have unstable pair $p_{i, j}$ in row $i$. Dynamic Selection may be achieved with the following steps:
    
    \begin{enumerate}
        \item If $N_i = \emptyset$ and $p_{i, j}$ is selected as an NM1 pair, proceed to 4. Otherwise, continue.
        \item If the wait condition is satisfied, left-minimum processor $PE_{i, l}$ enters the comparison procedure outlined in section 5.2. Otherwise, NM1 selection proceeds as normal. Row-wise find-minimum operations may still be achieved in $O(log(n))$ time.
        \item If $p_{i, j}$ is selected as an NM1 pair, $PE_{i, j}$ compares its left value to $PE_{i, l}$. This is achieved in $O(1)$ time. If $L(p_{i, j}) \ge L(p_{i, l})$, no change occurs. Otherwise, $L(p_{i, j}) < L(p_{i, l})$ and proceed to 4.
        \item  $PE_{i, j}$ sets its minimum pointer to itself, and broadcasts its position along row $i$. Each processor then proceeds to set its minimum pointer to $PE_{i, j}$. This is achieved in $O(log(n))$ time.
        
    \end{enumerate}

    \noindent Each step occurs in at most $O(log(n))$ time. Hence, per iteration complexity remains $O(log(n))$. 

\end{proof}

\noindent We have empirically determined that incrementing $\mathcal{W}$ each time the left-minimum processor is selected as an NM1 pair yields the best results. We then set $\mathcal{W}$ to 2 when a new a left-minimum processor is chosen.

\subsubsection{Preprocessing}
We also propose a new quick initialization method. Each man $m_i$ in $M$ sequentially proposes to his top choice woman remaining. When a woman receives a proposal, they are matched and the woman is removed from the preference lists of all other men. After all men have proposed once, all men will have been matched.
Each woman receives exactly one proposal, so proposals can be done in $O(1)$ runtime. In parallel with $n^2$ processors, removing a woman from the preference lists of other men can also be done in $O(1)$ runtime, allowing quick initialization to run with $O(n)$ complexity in parallel.

\section{The PII-RMD Algorithm}

We define the PII-RMD algorithm as the PII algorithm augmented with Right-Minimum Selection and Dynamic Selection. We have empirically determined that beginning Right-Minimum Selection after $n$ iterations and initially setting the wait time $\mathcal{W}$ of Dynamic Selection to $n$ produced the fastest convergence.

\section{Results}

\begin{figure*}[t]
    \centering
    \includegraphics[scale = 0.65]{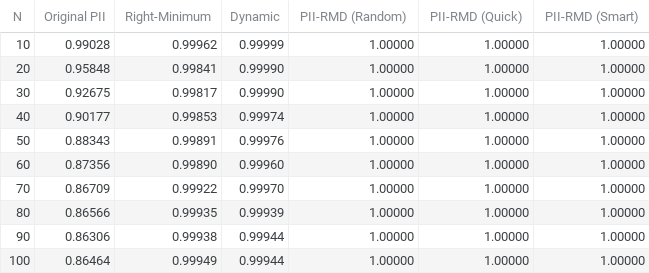}
    \caption{Probability of successfully finding a stable matching within $5n$ iterations with $100,000$ trials for various $n$ ranging from $10-100$. All methods were tested using random initialization, except PII-RMD (Quick) and PII-RMD (Smart) which used Quick and Smart Initialization respectively.}
    \label{fig:converge}
    \vspace{0.5cm}
    \centering
    \includegraphics[scale = 0.65]{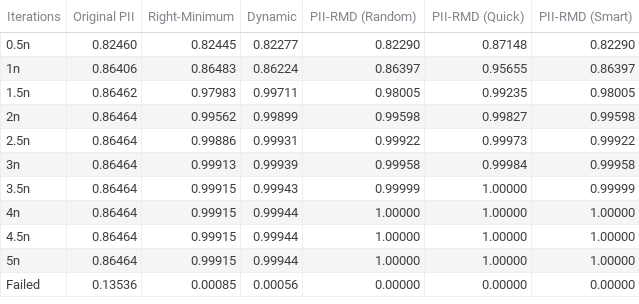}
    \caption{Probability of successfully finding a stable matching with $n=100$ with $100,000$ trials within various numbers of iterations ranging from $0.5n-5n$. All methods were tested using random initialization, except PII-RMD (Quick) and PII-RMD (Smart) which used Quick and Smart Initialization respectively.}
    \label{fig:speed}
\end{figure*}

\noindent We have implemented the complete PII-RMD algorithm, along with Right-Minimum Selection, Dynamic Selection, and the original PII algorithm for comparison. For each algorithm, we tested for convergence and speed using both randomized initialization and Smart Initialization.
\\ \\
In each trial, we create randomized preference lists, done by generating a random permutation of n preference lists of length n, and run each of the different algorithms with each initialization method on the created preference matrix. In \ref{fig:converge}, we computed the success rate of each algorithm in finding a stable matching within 5n iterations where n ranges from 10 to 100, running 1 million trials for each algorithm variant. In \ref{fig:speed}, we computed the number of iterations each algorithm required to find a stable matching at n = 100, varying the number of iterations from 0.5n to 5n and running 100,000 total trials for each algorithm variant.
\\ \\
From \ref{fig:converge}, using Right-Minimum Selection instead of the original PII selection improves convergence at larger n from $86\%$ to over $99.9\%$. Furthermore, Right-Minimum Selection does not see a performance decrease as n grows in \ref{fig:converge}. When tested with limited trials ($1000$ trials for each value of n) for n ranging from 100 to 1000, Right-Minimum selection also saw no performance decrease at higher values of n. Cases where Right-Minimum Selection alone fails to find a stable matching still terminate as shown in section 3.1, and the final matching has always been observed to contain very few unstable pairs.
\\ \\
Dynamic Selection and Cycle Detection both force pairs from a single stable matching to be chosen when the algorithm is cycling between pairs from multiple separate matchings. Comparing Dynamic Selection with the Cycle Detection data in \cite{white2013parallel}, both methods perform similarly in improving algorithm convergence rate. However, the cycle detection method does not begin until iteration 3n, causing a significant convergence runtime improvement when using Dynamic Selection over Cycle Detection.
\\ \\
For the same reason, the PII-RMD algorithm showed an improvement in convergence speed compared to the PII-SC algorithm (for example, the PII-RMD algorithm reaches $99.9\%$ convergence within 2.5n iterations with any initialization method, whereas the PII-SC algorithm with Smart Initialization requires 3.5n iterations to do the same). Overall, the PII-RMD algorithm converged in all cases for random, quick, and Smart initialization, with fastest convergence when using quick initialization \ref{fig:speed} removing the need for an expensive $O(n \: log(n))$ runtime for preprocessing.
\\ \\
To assess the scalability of our PII-RMD algorithm with higher values of n, we have also tested the PII-RMD algorithm for n=100, 110, ..., 200. For each initialization method we ran 10,000 tests for each value of n, resulting in 330,000 total trials which fully converged, suggesting the PII-RMD algorithm does not see the same steep decline with increasing $n$ observed in the PII-SC algorithm even for $n \leq 100$ in \cite{white2013parallel}. These results support the scalability of the PII-RMD algorithm, achieving the original PII algorithm's purpose of an efficient in-practice stable matching algorithm on a reasonable number of processors.
\\ \\
Overall, the PII-RMD algorithm is empirically fully convergent within 5n iterations in all 3,630,000 trials across all tested values of n, including testing at higher values of n compared to previous PII algorithm iterations. The runtime required for preprocessing and the number of iterations required for convergence were also reduced compared to previous PII algorithm iterations.

\section{Concluding Remarks and Future Work}

The results suggest the PII-RMD algorithm is a significant improvement over previous PII algorithm variations in both runtime and convergence. Empirically, the PII-RMD algorithm exhibits complete convergence within 5n iterations, and thus has always been observed to converge in $O(n \: log(n))$ runtime.
\\ \\
The ultimate goal of our research in the PII algorithm is to determine a fully convergent stable matching algorithm on $n^2$ processors with a theoretical worst-case runtime of $O(n \: log(n))$, to provide an efficient algorithm with a low requirement on the number of processes available for large scale applications. It remains unclear whether PII-RMD algorithm fully converges theoretically.
\\ \\
We hope to prove the convergence of the PII-RMD algorithm or another algorithm satisfying the same runtime and parallel architecture constraints. In doing so, we hope to also better understand the properties of instability for the stable matching problem, as very limited research has been done in this area \cite{eriksson2008instability}.
\\ \\

\end{document}